\documentclass[letter, 10pt, conference]{ieeeconf}      

\IEEEoverridecommandlockouts                              

\usepackage{amsmath}
\usepackage{amssymb}
\usepackage{amsfonts}
\usepackage{graphicx}
\usepackage{enumerate}
\usepackage{mathrsfs}
\usepackage{float}
\usepackage{cite}
\usepackage{tikz}
\usetikzlibrary{matrix,arrows,decorations.pathmorphing}
\usepackage{verbatim}
\usepackage{mathtools}
\usepackage{caption}
\usepackage{subcaption}
\usepackage{soul}
\setstcolor{magenta}
\usetikzlibrary{arrows}
\usepackage{tabularx}

\graphicspath{{./figs/}}

\title{\LARGE \bf
Linear Feedback Controller for Homogeneous Polynomial Systems$^{*}$}

\author{Shaoxuan Cui$^{1}$, Qi Zhao$^{2}$, Guanlin Li$^{2}$, Hildeberto Jardón-Kojakhmetov$^{1}$ and Ming Cao$^{3}$ 
\thanks{$^{1}$ S. Cui, and H. Jard\'on-Kojakhmetov are with the Bernoulli Institute for Mathematics, Computer Science and Artificial Intelligence, University of Groningen, Groningen, 9747 AG Netherlands {\tt\small \{s.cui, h.jardon.kojakhmetov\}@rug.nl}}
\thanks{$^{2}$ Q. Zhao and G. Li are with the School of Data Science, Qingdao University of Science and Technology, Qingdao, 266061 China. (e-mail: zhaoqi\_1995@163.com; liguanlin0619@gmail.com) }
\thanks{$^{3}$ M. Cao is with the Engineering and Technology institute Groningen, University of Groningen, Groningen, 9747 AG Netherlands {\tt\small m.cao@rug.nl}}
\thanks{$^{*}$ S. Cui and M. Cao were supported in part by the Netherlands Organization for Scientific Research (NWO-Vici-19902). Q. Zhao was supported by the Qingdao Natural Science Foundation (Grant No. 25-1-1-121-zyyd-jch). }
}

\begin{document}

\maketitle

\thispagestyle{empty}
\pagestyle{empty}

\newtheorem{corollary}{Corollary}
\newtheorem{remark}{Remark}
\newtheorem{lemma}{Lemma}
\newtheorem{theorem}{Theorem}
\newtheorem{example}{Example}
\newtheorem{definition}{Definition}
\newtheorem{proposition}{Proposition}

\begin{abstract}
This paper studies stabilization and its corresponding closed-loop region-of-attraction (ROA) for homogeneous polynomial dynamical systems whose nonlinear term admits an orthogonally decomposable (ODECO) tensor representation. While recent tensor-based results provide explicit solutions and sharp global characterizations for open-loop ODECO systems, closed-loop synthesis and computable ROA estimates are still often dominated by local linearization or Lyapunov/SOS (sum of squares) methods, which can be conservative and computationally demanding. We propose a \emph{structure-preserving linear feedback} design that shares the ODECO eigenbasis of the system's tensor, thereby enabling closed-form trajectory expressions, explicit convergence/escape thresholds, and sharp ROA characterizations. Under mild conditions, we further derive robustness/ISS-type bounds for bounded disturbances. Numerical examples validate the theoretical results.
\end{abstract}

\begin{keywords}
Homogeneous polynomial dynamical systems, Orthogonally decomposable tensors (ODECO), Z-eigenvalues, Region of attraction, Input-to-state stability (ISS).
\end{keywords}
\section{Introduction}
\label{sec:intro}

Polynomial dynamical systems (PDS) appear in many nonlinear modeling paradigms,
including Lotka--Volterra dynamics \cite{goh1976global,cui2025analysis}, chemical reaction and biochemical kinetics models \cite{angeli2009tutorial},
as well as polynomial epidemic-type compartmental systems \cite{cui2025sis,cui2022discrete}.
Among them, homogeneous polynomial dynamical systems (HPDS) form a basic, relatively simple but widely studied subclass:
they are mathematically simpler, yet still capture essential nonlinear features such as finite-time escape  
and nontrivial regions of attraction (ROA) \cite{ahmadi2019algebraic,samardzija1983stability,chen2021stability,chen2022explicit,baillieul1980geometry}.
Because of their intrinsic nonlinearity, controller synthesis and stability certification for HPDS are still
dominated by local linearization-based designs \cite{khalil2002nonlinear,isidori1995nonlinear,slotine1991applied} or Lyapunov/SOS(sum of squares)-type methods \cite{saat2017analysis,papachristodoulou2005tutorial,parrilo2000structured}.
This often results in limited operating ranges, high computational costs, and conservative ROA estimates.

A recent line of research shows that \emph{tensor algebra} can endow HPDS with an explicit form of solution when the associated  tensor admits special decompositions.
In particular, Chen \cite{chen2022explicit} establishes explicit solution formulas and sharp stability characterizations for continuous-time \emph{orthogonally decomposable} (ODECO) HPDS by exploiting Z-eigenvalues and the orthogonal tensor decomposition. Then, Chen \cite{chen2021stability} further
derives explicit solution formulas for discrete-time counterpart using the same tensor toolbox.

In this paper, we develop a \emph{structure-preserving linear feedback} framework for homogeneous ODECO polynomial systems.
Our methodology is based on tensor orthogonal decomposition: we enforce that the linear feedback shares the ODECO basis of the dynamic tensor, so that the closed-loop dynamics remain exactly modal-decoupled into independent scalar polynomial ODEs. 
This enables explicit solutions, explicit blow-up/convergence thresholds, and a sharp estimate of the region of attraction (ROA).

The main contributions are: (i) a \emph{tunable} shared-basis linear feedback controller (via modal gains ${\kappa_r}$) that preserves the ODECO structure and yields exact decoupling; (ii) closed-form modal trajectories that provide sharp, explicitly \emph{shaped} ROA thresholds and convergence/escape time estimates; and (iii) for even degree under matched bounded disturbances, robust invariant sets and ISS-type ultimate bounds.

\section{Preliminaries}\label{sec:preliminaries}

Let $\mathbb{R}$ be the set of real numbers and $\|\cdot\|_2$ the Euclidean norm.
For $n\in\mathbb{N}$, denote $[n]:=\{1,\dots,n\}$.

A $k$th-order (cubical) tensor is an array $\mathcal{A}\in\mathbb{R}^{n\times\cdots\times n}$ with entries
$a_{i_1\cdots i_k}$, $i_j\in[n]$.
The tensor $\mathcal{A}$ is \emph{supersymmetric} if $a_{i_{\pi(1)}\cdots i_{\pi(k)}}=a_{i_1\cdots i_k}$ for any permutation $\pi$.

For $v\in\mathbb{R}^n$, the tensor $v^{\otimes k}\in\mathbb{R}^{n\times\cdots\times n}$ has entries
$(v^{\otimes k})_{i_1\cdots i_k}:=v_{i_1}\cdots v_{i_k}$.
Given $\mathcal{A}\in\mathbb{R}^{n\times\cdots\times n}$ and $x\in\mathbb{R}^n$, define the standard contraction
$\mathcal{A}x^{k-1}\in\mathbb{R}^n$ by
\begin{equation}\label{eq:prelim_tensor_contraction}
(\mathcal{A}x^{k-1})_i := \sum_{i_2,\dots,i_k=1}^n a_{i\,i_2\cdots i_k}\,x_{i_2}\cdots x_{i_k},\qquad i\in[n].
\end{equation}
If $\mathcal{A}$ is supersymmetric, the associated homogeneous polynomial (scalar form) is
\begin{equation}\label{eq:prelim_hom_poly}
\mathcal{A}x^{k} := \sum_{i_1,\dots,i_k=1}^n a_{i_1\cdots i_k}\,x_{i_1}\cdots x_{i_k}.
\end{equation}
In particular, homogeneous polynomial differential equations of degree $k-1$ can be written compactly as $\dot x=\mathcal{A}x^{k-1}$.

For a real supersymmetric tensor $\mathcal{A}\in\mathbb{R}^{n\times\cdots\times n}$, a pair $(\lambda,v)$ with
$\lambda\in\mathbb{R}$ and $v\in\mathbb{R}^n\setminus\{0\}$ is a \emph{Z-eigenpair} if
\begin{equation}\label{eq:prelim_Zeig}
\mathcal{A}v^{k-1}=\lambda v,\qquad v^\top v=1.
\end{equation}

A supersymmetric tensor $\mathcal{A}$ is \emph{orthogonally decomposable (ODECO)} if it admits a decomposition of the form
$\mathcal{A}=\sum_{r=1}^n \lambda_r\, v_r^{\otimes k},$
where $\{v_r\}_{r=1}^n$ is an orthonormal basis of $\mathbb{R}^n$ and $\lambda_r\in\mathbb{R}$.
In this case, $(\lambda_r,v_r)$ are Z-eigenpairs of $\mathcal{A}$.

\section{Homogeneous ODECO System with Linear Feedback}
\subsection{Closed-form solution for the controlled homogeneous ODECO system}\label{sec:closed_form_homogeneous}

We now focus on a homogeneous tensor system stabilized by a linear feedback that preserves the shared ODECO basis:
\begin{equation}\label{eq:hom_odeco_closedloop}
\dot x = u(x) + \mathcal A x^{k-1},\quad u(x)=Kx,\quad k\ge 3,\ \ x(t)\in\mathbb R^n,
\end{equation}
where
\begin{equation}\label{eq:hom_odeco_decomp}
\mathcal A=\sum_{r=1}^{n}\lambda_r\, v_r^{\otimes k},
\qquad
K=\sum_{r=1}^{n}\kappa_r\, v_r v_r^\top,
\qquad r=1,\dots,n,
\end{equation}
with an orthonormal basis $\{v_r\}_{r=1}^n$.

\begin{remark}\label{rem:K_not_restrictive}
At first sight, requiring the linear term $K$ to share the same basis $\{v_r\}$ as the ODECO tensor (cf.~\eqref{eq:hom_odeco_decomp}) may appear restrictive.
However, in our setting, we assume the original uncontrolled system is an ODECO homogeneous polynomial system \cite{chen2022explicit}, and the controller is a linear feedback. The controller matrix $K$ is a \emph{design variable} (full-state linear feedback), and \eqref{eq:hom_odeco_decomp} should be interpreted as a \emph{structure-preserving controller choice} rather than a plant constraint.
Indeed, once the ODECO decomposition of the open loop provides the modal directions $\{v_r\}$, selecting
$K=\sum_{r=1}^{n}\kappa_r\, v_r v_r^\top$
amounts to independently tuning the modal linear gains $\{\kappa_r\}$, which is precisely the degree of freedom needed for stabilization and performance shaping in the decoupled coordinates.
This choice enforces that $K$ shares the same eigenvectors as $\mathcal A$, thereby preserving the exact modal decoupling and enabling closed-form trajectory and region-of-attraction characterizations.
\end{remark}

\begin{theorem}\label{thm:closed_form}
Consider~\eqref{eq:hom_odeco_closedloop}--\eqref{eq:hom_odeco_decomp} and define the modal coordinates
$y(t):=V^\top x(t)$ with $V=[v_1,\dots,v_n]$ so that $y_r(t)=v_r^\top x(t)$.
Then the dynamics decouple as
\begin{equation}\label{eq:mode_bernoulli}
\dot y_r = \kappa_r y_r + \lambda_r y_r^{k-1},
\qquad r=1,\dots,n,
\end{equation}
and the state is reconstructed by
\begin{equation}\label{eq:reconstruct_x_closed}
x(t)=\sum_{r=1}^{n} y_r(t)\, v_r.
\end{equation}
Let $p:=k-2>0$ and denote $y_{r,0}:=y_r(0)=v_r^\top x(0)$.

\smallskip
\noindent
(i) If $\kappa_r\neq 0$, then for any time interval on which $y_r(t)\neq 0$ (hence its sign is constant), the trajectory admits the closed-form representation
\begin{equation}\label{eq:closed_form_kappa_nonzero}
y_r(t)
=
y_{r,0}\Bigg[
e^{-p\kappa_r t}
-\frac{\lambda_r}{\kappa_r}\,y_{r,0}^{p}\Big(1-e^{-p\kappa_r t}\Big)
\Bigg]^{-1/p},
\end{equation}
valid on the maximal interval where the bracketed term is positive (so that the real-valued power is well-defined).

\smallskip
\noindent
(ii) If $\kappa_r=0$, then for any time interval on which $y_r(t)\neq 0$,
\begin{equation}\label{eq:closed_form_kappa_zero}
y_r(t)
=
y_{r,0}\,
\Big(1-p\,\lambda_r\, y_{r,0}^{p}\, t\Big)^{-1/p},
\end{equation}
valid on the maximal interval where $1-p\,\lambda_r\, y_{r,0}^{p}\, t>0$.
\end{theorem}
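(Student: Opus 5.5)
The proof has two parts: first decouple the dynamics using orthonormality of $\{v_r\}$, then solve each scalar Bernoulli equation by the substitution $w=y^{-p}$.

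\textbf{Decoupling.} Orthonormality gives $v_r^\top v_s=\delta_{rs}$ and $VV^\top=I$. Since $x=Vy$, we have $x=\sum_s y_s v_s$, which is \eqref{eq:reconstruct_x_closed}.

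For the linear term, $Kx=\sum_r\kappa_r v_r (v_r^\top x)=\sum_r \kappa_r y_r v_r$.

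For the tensor term, we expand the contraction \eqref{eq:prelim_tensor_contraction} for a rank-one term $v^{\otimes k}$. This gives $(v^{\otimes k})x^{k-1}=(v^\top x)^{k-1}v$. Hence $\mathcal A x^{k-1}=\sum_r\lambda_r y_r^{k-1}v_r$.

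Applying $v_r^\top$ to $\dot x$ then yields $\dot y_r=\kappa_r y_r+\lambda_r y_r^{k-1}$, which is \eqref{eq:mode_bernoulli}.

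\textbf{Scalar solution.} Fix $r$ and an interval on which $y_r\neq0$. Because the right-hand side is locally Lipschitz and $0$ is an equilibrium, uniqueness implies $y_r$ is either identically zero or never zero. Its sign is therefore constant and equal to the sign of $y_{r,0}$.

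Set $w:=y_r^{-p}$. This is well defined even when $p$ is odd and $y_r<0$, since $y_r^{-p}$ is then an integer power. Differentiating gives
\[
\dot w=-p y_r^{-p-1}\dot y_r=-p\kappa_r w-p\lambda_r,
\]
which is a linear ODE with constant coefficients.

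For $\kappa_r\neq0$, its solution is
\[
w(t)=e^{-p\kappa_r t}w_0-\frac{\lambda_r}{\kappa_r}\bigl(1-e^{-p\kappa_r t}\bigr).
\]
For $\kappa_r=0$, it is $w(t)=w_0-p\lambda_r t$. In both cases $w_0=y_{r,0}^{-p}$.

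\textbf{Inversion.} The remaining step is to recover $y_r$ from $w$ and match the displayed formulas. Write
\[
w(t)=y_{r,0}^{-p}B(t),
\]
where $B$ is the bracket in \eqref{eq:closed_form_kappa_nonzero} or \eqref{eq:closed_form_kappa_zero}. Note $B(0)=1$.

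We need $y_r(t)=y_{r,0}B(t)^{-1/p}$. This follows from $y_r^{-p}=y_{r,0}^{-p}B$ provided two things hold: $B>0$, and $y_r/y_{r,0}>0$. The latter is the sign-constancy established above. Taking the positive $p$-th root of the positive ratio $(y_r/y_{r,0})^{-p}=B$ then gives the formula.

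\textbf{Validity interval.} The main care point is showing the formula holds exactly on the maximal interval where $B>0$. The function $B$ is continuous with $B(0)=1$, and $y_r^{-p}$ has the constant sign of $y_{r,0}^{-p}$. Together these show $B$ stays positive wherever the solution exists.

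Conversely, on the interval where $B>0$, the candidate formula is smooth and satisfies the ODE; this can be checked by reversing the substitution. By uniqueness it coincides with $y_r$. As $B\downarrow0$ the solution blows up, which confirms maximality.

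The case $y_{r,0}=0$ is trivial, since then $y_r\equiv0$.
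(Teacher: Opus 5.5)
Your proposal is correct and follows essentially the same route as the paper. Both decouple the dynamics via orthonormality and the rank-one contraction identity $(v^{\otimes k})x^{k-1}=(v^\top x)^{k-1}v$, then linearize each Bernoulli mode with $w=y_r^{-p}$ and invert. Your uniqueness argument for sign constancy and your check that the formula holds exactly on the maximal interval where the bracket is positive make explicit what the paper only asserts when it says the prefactor $y_{r,0}$ ``fixes the real-valued branch.''
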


\begin{proof}
Write $x=Vy=\sum_{r=1}^n y_r v_r$ with $y=V^\top x$ and $y_r=v_r^\top x$.
Using~\eqref{eq:hom_odeco_decomp} and orthonormality, we have
$Kx
=\sum_{r=1}^{n}\kappa_r\, y_r v_r,
\qquad
\mathcal A x^{k-1}
=\sum_{r=1}^{n}\lambda_r\, (v_r^\top x)^{k-1} v_r
=\sum_{r=1}^{n}\lambda_r\, y_r^{k-1} v_r.$
Substituting into~\eqref{eq:hom_odeco_closedloop} gives
$\dot x=\sum_{r=1}^{n}\big(\kappa_r y_r+\lambda_r y_r^{k-1}\big)v_r.$
Projecting onto each $v_r$ yields~\eqref{eq:mode_bernoulli}, and the reconstruction follows from $x=Vy$, i.e.,~\eqref{eq:reconstruct_x_closed}.
Fix a mode $r$ and consider~\eqref{eq:mode_bernoulli}.
Let $p=k-2$ and work on any interval where $y_r(t)\neq 0$.
Define $u_r(t):=y_r(t)^{-p}$. Then
$\dot u_r
=
-p\,y_r^{-p-1}\dot y_r
=
-p\big(\kappa_r y_r^{-p}+\lambda_r\big)
=
-p\kappa_r u_r - p\lambda_r.$
If $\kappa_r\neq 0$, solving this linear ODE gives
$u_r(t)
=
e^{-p\kappa_r t}u_r(0) - \frac{\lambda_r}{\kappa_r}\Big(1-e^{-p\kappa_r t}\Big),
\qquad u_r(0)=y_{r,0}^{-p}.$
Equivalently,
$y_r(t)^{-p}
=
y_{r,0}^{-p}
\Bigg[
e^{-p\kappa_r t}
-\frac{\lambda_r}{\kappa_r}\,y_{r,0}^{p}\Big(1-e^{-p\kappa_r t}\Big)
\Bigg].$
Inverting yields~\eqref{eq:closed_form_kappa_nonzero}. The prefactor $y_{r,0}$ fixes the real-valued branch so that $y_r(t)$ preserves the sign of $y_{r,0}$ on the considered interval.
If $\kappa_r=0$,~\eqref{eq:mode_bernoulli} becomes $\dot y_r=\lambda_r y_r^{p+1}$, and integrating gives~\eqref{eq:closed_form_kappa_zero}.
Finally,~\eqref{eq:reconstruct_x_closed} follows from $x=Vy=\sum_r y_r v_r$.
\end{proof}

\begin{remark}
A key advantage of the proposed controller is its \emph{direct tunability}: by selecting the modal gains $\{\kappa_r\}$, one can explicitly shape the ROA thresholds (mode-wise), adjust convergence and escape-time profiles via closed-form formulas, and obtain disturbance-to-state guarantees through the derived ISS-type bounds. These advantages will be characterized explicitly in the following Sections.
\end{remark}

\subsection{Computable region of attraction}\label{sec:roa}

In this subsection we characterize a computable region of attraction (ROA) for the origin of the controlled homogeneous ODECO system~\eqref{eq:hom_odeco_closedloop}--\eqref{eq:hom_odeco_decomp}.
Throughout, let $p:=k-2>0$ and denote $y_{r,0}:=v_r^\top x(0)$.
We emphasize that the parity of $p$ (equivalently, of $k$) affects the geometry of the ROA in the modal coordinates.

\begin{proposition}\label{prop:local_exp}
Consider~\eqref{eq:hom_odeco_closedloop}--\eqref{eq:hom_odeco_decomp}.
If $\kappa_r<0$ for all $r=1,\dots,n$, then the origin is a locally exponentially stable equilibrium of the closed-loop system.
\end{proposition}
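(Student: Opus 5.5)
We will prove this through the modal decoupling of Theorem~\ref{thm:closed_form} together with a linearization argument, and back it up with an explicit quadratic Lyapunov function so that the exponential rate is concrete. The first step is to compute the Jacobian of the closed-loop vector field $f(x)=Kx+\mathcal A x^{k-1}$ at the origin. Since $k\ge 3$, the nonlinear term $\mathcal A x^{k-1}=\sum_r \lambda_r (v_r^\top x)^{k-1}v_r$ is homogeneous of degree $k-1\ge 2$. Its derivative is $\sum_r (k-1)\lambda_r (v_r^\top x)^{k-2} v_r v_r^\top$, and this vanishes at $x=0$. Hence $Df(0)=K=\sum_r \kappa_r v_r v_r^\top = V\,\mathrm{diag}(\kappa_1,\dots,\kappa_n)V^\top$. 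Because $V$ is orthogonal, the eigenvalues of $Df(0)$ are exactly $\kappa_1,\dots,\kappa_n$, all negative by assumption. Lyapunov's indirect method then gives local exponential stability.

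To make the rate and the neighborhood explicit, we will also give a direct argument. Take $W(x)=\tfrac12\|x\|_2^2=\tfrac12\sum_r y_r^2$, using orthonormality of $\{v_r\}$. Along \eqref{eq:mode_bernoulli},
\[
\dot W=\sum_r\big(\kappa_r y_r^2+\lambda_r y_r^{k}\big)\le \sum_r y_r^2\big(\kappa_r+|\lambda_r|\,|y_r|^{p}\big).
\]
Set $\kappa_{\max}:=\max_r\kappa_r<0$ and $\lambda_{\max}:=\max_r|\lambda_r|$. Choose $\delta>0$ with $\lambda_{\max}\delta^{p}\le |\kappa_{\max}|/2$; if $\lambda_{\max}=0$, any $\delta$ works. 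On the ball $\|x\|_2<\delta$ we have $|y_r|\le\|x\|_2<\delta$ for every $r$, so $\dot W\le \kappa_{\max}W$.

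This ball is forward invariant because $W$ is nonincreasing on it. Consequently $\|x(t)\|_2\le e^{\kappa_{\max}t/2}\|x(0)\|_2$ for every $x(0)$ in the ball. This is local exponential stability with rate $|\kappa_{\max}|/2$.

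A sharper bound follows from Theorem~\ref{thm:closed_form}(i). Near zero, the bracket in \eqref{eq:closed_form_kappa_nonzero} behaves like $e^{-p\kappa_r t}$, so $|y_r(t)|\sim |y_{r,0}|e^{\kappa_r t}$. This gives each mode rate $|\kappa_r|$, but it is not needed for the proposition.

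The only step requiring care is the uniform control of the nonlinear remainder. The bound must hold simultaneously for all modes, which is why we use $|y_r|\le\|x\|_2$. It must also handle odd $k$, where $\lambda_r y_r^{k}$ may take either sign; taking absolute values settles this. Neither point poses a real obstacle.
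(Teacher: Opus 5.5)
Your proof is correct and follows the same route as the paper: the paper linearizes each decoupled mode to $\dot y_r=\kappa_r y_r$, which is equivalent to your observation that $Df(0)=K=V\,\mathrm{diag}(\kappa_1,\dots,\kappa_n)V^\top$ has eigenvalues $\kappa_r<0$. Your extra quadratic Lyapunov bound with $W=\tfrac12\|x\|_2^2$ is valid, and it makes explicit what the paper's step ``each scalar subsystem is locally exponentially stable, hence so is the original system'' leaves implicit, namely a common neighborhood $\|x\|_2<\delta$ and a uniform rate $|\kappa_{\max}|/2$.
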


\begin{proof}
Under~\eqref{eq:hom_odeco_decomp}, each mode satisfies the scalar ODE~\eqref{eq:mode_bernoulli}.
The linearization at the origin is $\dot y_r=\kappa_r y_r$.
If $\kappa_r<0$ for all $r$, then each decoupled scalar subsystem  is locally exponentially stable, hence so is the original system.
\end{proof}

We next provide an explicit ROA characterization that guarantees convergence to the origin, even in the presence of ``destabilizing'' higher-order modes.

\begin{theorem}\label{thm:roa_box}
Consider~\eqref{eq:hom_odeco_closedloop}--\eqref{eq:hom_odeco_decomp} and assume $\kappa_r<0$ for all $r$.
Define the index sets
$\mathcal I_+:=\{r:\lambda_r>0\},\qquad
\mathcal I_-:=\{r:\lambda_r<0\}.$
For each $r\in\mathcal I_+$, define the (mode-wise) nonzero equilibrium threshold
$c_{r}:=\Bigl(-\frac{\kappa_r}{\lambda_r}\Bigr)^{1/p}>0.$

\smallskip
\noindent
\textnormal{(i) Even $p$ (equivalently, even $k$).}
Define
$\mathcal R_{\mathrm{even}}
:=
\Bigl\{x_0\in\mathbb R^n:\ 
|v_r^\top x_0|<c_{r}\ \ \forall r\in\mathcal I_+\Bigr\}.$
Then, for every $x(0)=x_0\in\mathcal R_{\mathrm{even}}$, the corresponding solution of~\eqref{eq:hom_odeco_closedloop}
exists for all $t\ge 0$, satisfies $\lim_{t\to\infty} x(t)=0$, and $\mathcal R_{\mathrm{even}}$ is forward invariant.

\smallskip
\noindent
\textnormal{(ii) Odd $p$ (equivalently, odd $k$).}
Define
$\mathcal R_{\mathrm{odd}}
:=
\Bigl\{x_0\in\mathbb R^n:\ 
v_r^\top x_0<c_{r}\ \ \forall r\in\mathcal I_+,\ \ 
v_r^\top x_0>c_{r}\ \ \forall r\in\mathcal I_-\Bigr\}.$
Then, for every $x(0)=x_0\in\mathcal R_{\mathrm{odd}}$, the corresponding solution of~\eqref{eq:hom_odeco_closedloop}
exists for all $t\ge 0$, satisfies $\lim_{t\to\infty} x(t)=0$, and $\mathcal R_{\mathrm{odd}}$ is forward invariant.
\end{theorem}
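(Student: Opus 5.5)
The plan is to reduce everything to scalar phase-line arguments, using the modal decoupling of Theorem~\ref{thm:closed_form}. In the coordinates $y=V^\top x$, each mode evolves independently by $\dot y_r=f_r(y_r):=y_r(\kappa_r+\lambda_r y_r^{p})$. Since $V$ is orthogonal and $x=Vy$, three facts transfer back to $x$: existence for all $t\ge0$, convergence to $0$, and forward invariance of a set defined mode-wise by constraints on $y_r=v_r^\top x$. It therefore suffices to show the following for each $r$: the admissible interval $J_r$ for $y_{r,0}$ in the statement is forward invariant for the scalar ODE, and every solution starting in $J_r$ tends to $0$. Modes with no constraint have $J_r=\mathbb R$. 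Because the constraints are imposed mode-wise, $\mathcal R_{\mathrm{even}}$ and $\mathcal R_{\mathrm{odd}}$ are the products $\prod_r J_r$ read in the $y$-coordinates, so product invariance gives the set invariance.

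For each mode I will find the zeros of $f_r$ and the sign of $f_r$ between them. Since $f_r$ is a polynomial, it is locally Lipschitz, so solutions are unique. Hence no trajectory can cross an equilibrium, and a scalar trajectory is monotone and converges to an equilibrium whenever it stays bounded. The cases are as follows.
\begin{itemize}
\item If $\lambda_r=0$, the mode is $\dot y_r=\kappa_r y_r$. This is globally exponentially stable, so $J_r=\mathbb R$.
\item If $\lambda_r>0$, then $f_r(y)=y(\kappa_r+\lambda_r y^p)$ vanishes at $y=0$ and where $y^p=c_r^p$. For even $p$ this gives $\pm c_r$; for odd $p$ it gives only $+c_r$. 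On $(0,c_r)$ we have $f_r<0$. On $(-c_r,0)$ (even $p$), or on $(-\infty,0)$ (odd $p$, where $\lambda_r y^p<0$), we have $f_r>0$. Hence $(-c_r,c_r)$ (even $p$), respectively $(-\infty,c_r)$ (odd $p$), is invariant. Every trajectory in it moves monotonically toward $0$. Since it converges to an equilibrium and $0$ is the only equilibrium it can reach, it tends to $0$. Boundedness is immediate in every case: for $y_{r,0}<0$ with odd $p$, the trajectory is increasing and bounded above by $0$.
\item If $\lambda_r<0$ and $p$ is even, then $\kappa_r+\lambda_r y^p<0$ for all $y\neq0$. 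So $y f_r(y)<0$, and $|y_r|$ decreases monotonically to $0$, globally.
\item If $\lambda_r<0$ and $p$ is odd, the statement's $c_r$ must be read as the real odd root $c_r=-(\kappa_r/\lambda_r)^{1/p}<0$. This is exactly the nonzero equilibrium. On $(c_r,0)$ we have $f_r>0$, and on $(0,\infty)$ we have $f_r<0$. So $(c_r,\infty)$ is invariant and all its trajectories converge monotonically to $0$.
\end{itemize}

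Global existence on $[0,\infty)$ follows because each $y_r(t)$ stays in a bounded set: it lies between $y_{r,0}$ and $0$. A maximal solution that remains bounded cannot escape in finite time. Local exponential convergence near the origin is already covered by Proposition~\ref{prop:local_exp}, but it is not needed here.

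The main obstacle is the bookkeeping in the odd-$p$ case. First, the definition of $c_r$ for $r\in\mathcal I_-$ must be interpreted, since $-\kappa_r/\lambda_r<0$ there. Second, I must verify that the one-sided regions are genuinely unbounded on the safe side without finite-time escape. Finite-time escape is excluded because $f_r$ has the stabilizing sign on the whole unbounded side. As a cross-check on the sign analysis, I would also use the closed form~\eqref{eq:closed_form_kappa_nonzero}. With $\kappa_r<0$, the bracket is $e^{p|\kappa_r|t}\bigl(1-y_{r,0}^{p}/c_r^{p}\bigr)+y_{r,0}^p/c_r^p$ in the $\lambda_r>0$ case. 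This stays positive and grows exponentially exactly when $y_{r,0}^p<c_r^p$, which confirms both the thresholds and exponential decay of $y_r(t)$.
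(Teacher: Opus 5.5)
Your proposal is correct and follows essentially the same route as the paper: modal decoupling via Theorem~\ref{thm:closed_form}, then a case-by-case sign analysis of $f_r(s)=s(\kappa_r+\lambda_r s^p)$ on each scalar phase line, with the mode-wise (product) structure of the constraints giving forward invariance of $\mathcal R_{\mathrm{even}}$ and $\mathcal R_{\mathrm{odd}}$. You are somewhat more careful than the paper in two places: you make forward completeness explicit (each $y_r(t)$ stays between $y_{r,0}$ and $0$), and you interpret $c_r$ for $r\in\mathcal I_-$ as the real odd root $-(\kappa_r/\lambda_r)^{1/p}<0$, which the statement defines only for $r\in\mathcal I_+$.
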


\begin{proof}
By Theorem~\ref{thm:closed_form}, the closed-loop dynamics decouple into scalar modes
$\dot y_r=\kappa_r y_r+\lambda_r y_r^{p+1},\qquad r=1,\dots,n,$
and $x(t)=Vy(t)=\sum_{r=1}^n y_r(t)v_r$.
Fix a mode $r$ and define
$g_r(s):=\kappa_r+\lambda_r s^{p},\qquad f_r(s):=s\,g_r(s)=\kappa_r s+\lambda_r s^{p+1}.$
Note that $y_r=0$ is always an equilibrium, and any nonzero equilibrium satisfies $g_r(y_r)=0$.

\smallskip
\noindent
\emph{Even $p$.}
When $p$ is even, $y^p=|y|^p\ge 0$. Hence for any $\lambda_r\le 0$ we have
$g_r(|y|)=\kappa_r+\lambda_r|y|^p\le \kappa_r<0\quad \forall\,y\in\mathbb R,$
and therefore $\dot y_r = y_r\,g_r(|y_r|)$ has the sign opposite to $y_r$, implying $y_r(t)\to 0$ globally.

For $r\in\mathcal I_+$, $c_r>0$ solves $g_r(c_r)=0$.
If $|y_r|<c_r$, then $g_r(|y_r|)<0$, hence $\dot y_r$ has the sign opposite to $y_r$ and the interval $(-c_r,c_r)$ is forward invariant and attracts to $0$.
Thus, if $x_0\in\mathcal R_{\mathrm{even}}$, all modes converge to $0$ and $x(t)=Vy(t)\to 0$.
Forward invariance follows since each constraint $|y_r|<c_r$ is preserved over time.

\smallskip
\noindent
\emph{Odd $p$.}
When $p$ is odd, $s\mapsto s^p$ is strictly increasing on $\mathbb R$.
We treat three cases.

\emph{Case 1: $r\in\mathcal I_+$ (i.e., $\lambda_r>0$).}
Then $c_r>0$ and $g_r(c_r)=0$, with $g_r(s)<0$ for all $s<c_r$.
Hence, if $y_r(0)<c_r$, we have $g_r(y_r(t))<0$ as long as $y_r(t)<c_r$, and thus
$\dot y_r = y_r g_r(y_r)
\ \text{has sign opposite to } y_r \text{ when } y_r>0,\ \text{and is positive when } y_r<0.$
Consequently, the half-line $(-\infty,c_r)$ is forward invariant and $y_r(t)\to 0$.

\emph{Case 2: $r\in\mathcal I_-$ (i.e., $\lambda_r<0$).}
Then $c_r<0$ and $g_r(c_r)=0$.
Because $g_r$ is increasing, we have $g_r(s)<0$ for $s\in(c_r,0)$ and $g_r(s)>0$ for $s<c_r$.
Thus, if $y_r(0)>c_r$, then $y_r(t)$ cannot cross $c_r$:
on $(c_r,0)$ we have $y_r<0$ and $g_r(y_r)<0$, so $\dot y_r=y_rg_r(y_r)>0$ and the trajectory moves upward toward $0$;
on $(0,\infty)$ we have $y_r>0$ and $g_r(y_r)\le \kappa_r<0$ (since $\lambda_r<0$ and $y_r^p>0$), so $\dot y_r<0$ and the trajectory decreases toward $0$.
Therefore, the half-line $(c_r,\infty)$ is forward invariant and attracts to $0$.

\emph{Case 3: $\lambda_r=0$.}
Then $\dot y_r=\kappa_r y_r$ and $y_r(t)\to 0$ for all initial conditions.

Combining the cases, if $x_0\in\mathcal R_{\mathrm{odd}}$, then all modes remain in their respective invariant half-lines and converge to $0$, hence $x(t)=Vy(t)\to 0$.
Forward invariance follows since each modal inequality is preserved over time.
\end{proof}

\begin{remark}[Special cases and conservativeness]\label{rem:roa_special}
If $\mathcal I_+=\emptyset$ (i.e., $\lambda_r\le 0$ for all $r$), then:
for even $p$ we obtain global convergence to the origin for any $x_0\in\mathbb R^n$ under $\kappa_r<0$;
for odd $p$, global convergence still holds when $\mathcal I_-=\emptyset$ (i.e., $\lambda_r=0$ for all $r$), but if $\mathcal I_-\neq\emptyset$ then the additional lower-bound constraints $v_r^\top x_0>c_r$ for $r\in\mathcal I_-$ are necessary to exclude the blow-down side $(-\infty,c_r)$ along those modes.
When $\mathcal I_+\neq\emptyset$, $\mathcal R_{\mathrm{even}}$ and $\mathcal R_{\mathrm{odd}}$ provide explicit ROA.
Moreover, for even $p$ the threshold is symmetric in each destabilizing mode, whereas for odd $p$ the ROA is described by modal inequalities determined by the sign of $\lambda_r$.
\end{remark}

\begin{remark}[Boundary cases]\label{rem:blowup_boundary}
Fix a mode with $\kappa_r<0$ and $\lambda_r\neq 0$, and let $c_r$ so that $g_r(s):=\kappa_r+\lambda_r s^p$ satisfies $g_r(c_r)=0$.
\textnormal{(i) Even $p$:}
For $\lambda_r>0$, if $|y_{r,0}|=c_r$ then $y_r(t)\equiv \pm c_r$ is an equilibrium and the trajectory does not converge to the origin.
Thus the modal ROA threshold $|y_{r,0}|<c_r$ is sharp for each destabilizing mode.
If $\lambda_r<0$, no finite-time escape occurs and the mode is globally attracted to $0$ under $\kappa_r<0$.
\textnormal{(ii) Odd $p$:}
For $\lambda_r>0$, the nonzero equilibrium is $y=c_r>0$; thus $y_{r,0}=c_r$ yields the constant solution $y_r(t)\equiv c_r$, while any $y_{r,0}>c_r$ escapes to $+\infty$.
If $y_{r,0}<c_r$, then the mode is forward complete and converges to $0$ (in particular, any $y_{r,0}<0$ converges to $0$).
For $\lambda_r<0$, the nonzero equilibrium is $y=c_r<0$; thus $y_{r,0}=c_r$ yields $y_r(t)\equiv c_r$, any $y_{r,0}<c_r$ escapes to $-\infty$, and any $y_{r,0}>c_r$ is forward complete and converges to $0$.
\end{remark}

\subsection{Convergence-time estimate within the ROA}\label{sec:roa_time}

We quantify the convergence time to a prescribed accuracy within the ROA.
For a given tolerance $\varepsilon>0$, define the (mode-wise) hitting time
$T_{\varepsilon,r}(y_{r,0})
:=
\inf\{t\ge 0:\ |y_r(t)|\le \varepsilon\},$
and the aggregate time
$T_\varepsilon(x_0):=\max_{r=1,\dots,n} T_{\varepsilon,r}(v_r^\top x_0).$
Since $V$ is orthogonal, the bound $|y_r(t)|\le \varepsilon$ for all $r$ implies
$\|x(t)\|_2=\|y(t)\|_2\le \sqrt{n}\,\varepsilon$.

\begin{proposition}\label{prop:settling_time}
Consider~\eqref{eq:hom_odeco_closedloop}--\eqref{eq:hom_odeco_decomp} with $\kappa_r<0$ for all $r$ and let $p:=k-2$.
Fix $\varepsilon>0$ and an initial condition $x_0$ in the ROA
($x_0\in\mathcal R_{\mathrm{even}}$ when $p$ is even, or $x_0\in\mathcal R_{\mathrm{odd}}$ when $p$ is odd; cf.~Theorem~\ref{thm:roa_box}).
Then each mode is well-defined for all $t\ge 0$ and reaches the band $|y_r(t)|\le \varepsilon$ in finite time.
For each $r$, let $y_{r,0}:=v_r^\top x_0$ and $\alpha_r:=|\kappa_r|=-\kappa_r>0$.
If $|y_{r,0}|\le \varepsilon$, then $T_{\varepsilon,r}=0$.
Otherwise ($|y_{r,0}|>\varepsilon$):

\smallskip
\noindent
\textnormal{(i) If $\lambda_r=0$, then}
\begin{equation}\label{eq:Teps_linear}
T_{\varepsilon,r}
=
\frac{1}{\alpha_r}\ln\!\Big(\frac{|y_{r,0}|}{\varepsilon}\Big).
\end{equation}

\smallskip
\noindent
\textnormal{(ii) If $\lambda_r\neq 0$ and $p$ is even, then}
\begin{equation}\label{eq:Teps_even}
T_{\varepsilon,r}
=
\frac{1}{p\alpha_r}
\ln\!\Bigg(
\frac{\varepsilon^{-p}-\lambda_r/\alpha_r}{|y_{r,0}|^{-p}-\lambda_r/\alpha_r}
\Bigg),
\end{equation}
where the logarithm argument is $>1$ for any $x_0\in\mathcal R_{\mathrm{even}}$ and any sufficiently small $\varepsilon>0$.

\smallskip
\noindent
\textnormal{(iii) If $\lambda_r\neq 0$ and $p$ is odd, then}
\begin{equation}\label{eq:Teps_odd}
T_{\varepsilon,r}
=
\frac{1}{p\alpha_r}
\ln\!\Bigg(
\frac{\varepsilon^{-p}-\operatorname{sign}(y_{r,0})\,\lambda_r/\alpha_r}
{|y_{r,0}|^{-p}-\operatorname{sign}(y_{r,0})\,\lambda_r/\alpha_r}
\Bigg),
\end{equation}
where the logarithm argument is $>1$ for any $x_0\in\mathcal R_{\mathrm{odd}}$ and any sufficiently small $\varepsilon>0$.
\end{proposition}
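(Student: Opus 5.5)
The plan is to work mode by mode, using the decoupling of Theorem~\ref{thm:closed_form} and the invariance and convergence facts of Theorem~\ref{thm:roa_box}. Since $x_0$ lies in the appropriate ROA, each $y_r(t)$ exists for all $t\ge 0$, stays in its invariant set ($(-c_r,c_r)$, a half-line, or all of $\mathbb R$), and tends to $0$. In particular $y_r$ never hits zero unless $y_{r,0}=0$, because $0$ is an equilibrium and solutions are unique; so $y_r(t)$ keeps the sign of $y_{r,0}$. I will also show that $|y_r(t)|$ is strictly decreasing while nonzero. In every invariant set used in Theorem~\ref{thm:roa_box} we have $g_r(y_r)<0$, so $\dot y_r$ has sign opposite to $y_r$. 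Hence the hitting time is the unique $t$ with $|y_r(t)|=\varepsilon$ whenever $|y_{r,0}|>\varepsilon$, and it is finite because $y_r\to0$. The case $|y_{r,0}|\le\varepsilon$ gives $T_{\varepsilon,r}=0$ directly from the definition.

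For (i), $y_r(t)=y_{r,0}e^{-\alpha_r t}$, and solving $|y_{r,0}|e^{-\alpha_r t}=\varepsilon$ gives \eqref{eq:Teps_linear}. For $\lambda_r\neq0$ I will use the substitution from the proof of Theorem~\ref{thm:closed_form}. With $w(t):=|y_r(t)|^{-p}$, the identity $u_r=y_r^{-p}$ becomes $w=\sigma u_r$, where $\sigma=1$ for even $p$ and $\sigma=\operatorname{sign}(y_{r,0})$ for odd $p$. This uses the fact that the sign of $y_r$ is constant. Multiplying $\dot u_r=-p\kappa_r u_r-p\lambda_r$ by $\sigma$ gives
$\dot w = p\alpha_r w - p\sigma\lambda_r$,
whose solution is
$w(t)-\sigma\lambda_r/\alpha_r = e^{p\alpha_r t}\bigl(|y_{r,0}|^{-p}-\sigma\lambda_r/\alpha_r\bigr)$.
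Setting $w(T)=\varepsilon^{-p}$ and taking logarithms yields \eqref{eq:Teps_even} and \eqref{eq:Teps_odd}. This step needs the denominator $D:=|y_{r,0}|^{-p}-\sigma\lambda_r/\alpha_r$ to be positive.

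The main point is the sign bookkeeping that guarantees $D>0$ and that the log argument exceeds $1$. If $\sigma\lambda_r<0$, then $D>0$ trivially. If $\sigma\lambda_r>0$, then $\lambda_r/\alpha_r=c_r^{-p}$ in the even case with $r\in\mathcal I_+$, and $D>0$ is equivalent to $|y_{r,0}|<c_r$, which is exactly the ROA constraint. For odd $p$, $\sigma\lambda_r>0$ means either $y_{r,0}>0$ with $\lambda_r>0$, where the constraint $y_{r,0}<c_r$ gives $D>0$, or $y_{r,0}<0$ with $\lambda_r<0$. In the second subcase $|c_r|^{-p}=-\lambda_r/\alpha_r$, and the constraint $y_{r,0}>c_r$ gives $|y_{r,0}|<|c_r|$, hence again $D>0$.

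The numerator is then $\varepsilon^{-p}-\sigma\lambda_r/\alpha_r=D+(\varepsilon^{-p}-|y_{r,0}|^{-p})$, which exceeds $D>0$ whenever $\varepsilon<|y_{r,0}|$. So the argument is greater than $1$ and $T_{\varepsilon,r}>0$. In fact this holds for every $\varepsilon<|y_{r,0}|$, which is stronger than the "sufficiently small" condition in the statement. Finally, since $w$ is strictly increasing, the crossing time computed this way coincides with the infimum in the definition of $T_{\varepsilon,r}$.
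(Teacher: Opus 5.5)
Your proposal is correct and follows the same route as the paper's Appendix proof. The substitution $w=|y_r|^{-p}$ (which equals $\operatorname{sign}(y_{r,0})\,y_r^{-p}$ for odd $p$) turns each mode into the linear ODE $\dot w=p\alpha_r w-p\sigma\lambda_r$, and solving $w(T)=\varepsilon^{-p}$ gives the stated hitting times. Beyond the paper, you explicitly check that the denominator is positive precisely because of the ROA constraints, and that $|y_r|$ is strictly decreasing, so the crossing time equals the infimum in the definition of $T_{\varepsilon,r}$. These details are left implicit in the paper, and your argument also shows the log argument exceeds $1$ for every $\varepsilon<|y_{r,0}|$, not only for small $\varepsilon$.
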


\begin{proof}
See the Appendix.
\end{proof}

\begin{remark}[Asymptotic rate and dominant modes]\label{rem:Teps_rate}
For $\kappa_r<0$, the asymptotic decay near the origin is exponential and governed by the linear rate $\alpha_r=|\kappa_r|$.
\end{remark}

\subsection{Regions leading to finite-time blow-up}\label{sec:blowup_regions}

Under the shared ODECO basis, each mode of the closed-loop system evolves independently according to~\eqref{eq:mode_bernoulli}; thus finite-time escape of any single mode implies finite-time escape of the full state.

\begin{theorem}[Computable finite-time escape regions]\label{thm:blowup_region_parity}
Consider~\eqref{eq:hom_odeco_closedloop}--\eqref{eq:hom_odeco_decomp} and assume $\kappa_r<0$ for all $r$.
Let $p:=k-2$ and let
$\mathcal I_+,
\mathcal I_-,c_r$ as defined in the Theorem \ref{thm:roa_box}.

\smallskip
\noindent
\textnormal{(i) Even $p$ (equivalently, even $k$).}
If there exists $r\in\mathcal I_+$ such that
$|v_r^\top x_0|>c_r,$
then the corresponding solution escapes to infinity in finite time.

\smallskip
\noindent
\textnormal{(ii) Odd $p$ (equivalently, odd $k$).}
\begin{itemize}
\item If there exists $r\in\mathcal I_+$ such that $v_r^\top x_0>c_r$, then the solution escapes to $+\infty$ in finite time.
Moreover, for such $r\in\mathcal I_+$, any initial condition with $v_r^\top x_0< c_r$ (in particular, any $v_r^\top x_0<0$) yields a forward-complete mode converging to $0$.

\item If there exists $r\in\mathcal I_-$ such that $v_r^\top x_0<c_r$ (note $c_r<0$), then the solution escapes to $-\infty$ in finite time (finite-time blow-down).
\end{itemize}

\smallskip
\noindent
In all cases, the (first) finite-time escape time satisfies
$T_{\mathrm{esc}}
=
\min_{r\in\{1,\dots,n\}}
T_{\mathrm{esc},r},$
where, for a mode $r$ with $\lambda_r\neq 0$ and initial value $y_{r,0}:=v_r^\top x_0$ that satisfies an escape condition,
\begin{equation}\label{eq:escape_time_mode}
T_{\mathrm{esc},r}
=
\frac{1}{p|\kappa_r|}
\ln\!\Bigg(
\frac{\lambda_r/|\kappa_r|}{\lambda_r/|\kappa_r|-y_{r,0}^{-p}}
\Bigg).
\end{equation}
\end{theorem}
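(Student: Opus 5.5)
The plan is to reduce everything to the scalar modes via Theorem~\ref{thm:closed_form}. The decoupled equations \eqref{eq:mode_bernoulli} hold, and $\|x(t)\|_2=\|y(t)\|_2$ because $V$ is orthogonal. Hence $x(t)$ escapes in finite time exactly when some $y_r$ does, and the first escape time is $\min_r T_{\mathrm{esc},r}$, with $T_{\mathrm{esc},r}=+\infty$ for non-escaping modes. Modes with $\lambda_r=0$ are linear and never escape. For each remaining mode, work with the transformed variable $u_r=y_r^{-p}$ from the proof of Theorem~\ref{thm:closed_form}. With $\kappa_r=-\alpha_r$, $\alpha_r>0$, the linear solution reads
\[
u_r(t)=e^{p\alpha_r t}\,y_{r,0}^{-p}+\frac{\lambda_r}{\alpha_r}\bigl(1-e^{p\alpha_r t}\bigr)
=\frac{\lambda_r}{\alpha_r}+e^{p\alpha_r t}\Bigl(y_{r,0}^{-p}-\frac{\lambda_r}{\alpha_r}\Bigr).
\]
Blow-up of $y_r$ along the sign branch of $y_{r,0}$ is equivalent to $u_r(t)$ reaching $0$ in finite time. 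Before that time $u_r$ is finite and nonzero, so $y_r$ stays finite and of constant sign, and the closed-form formula \eqref{eq:closed_form_kappa_nonzero} is valid.

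\emph{Even $p$.} Take $r\in\mathcal I_+$ with $|y_{r,0}|>c_r$. Then $y_{r,0}^{-p}=|y_{r,0}|^{-p}<c_r^{-p}=\lambda_r/\alpha_r$, so the coefficient of $e^{p\alpha_r t}$ is negative while $u_r(0)>0$. Hence $u_r$ decreases strictly and hits zero at the unique time
\[
e^{p\alpha_r T}=\frac{\lambda_r/\alpha_r}{\lambda_r/\alpha_r-y_{r,0}^{-p}},
\]
which is exactly \eqref{eq:escape_time_mode}. The ratio exceeds $1$, so $T>0$. As $t\uparrow T$ we have $u_r\downarrow 0$, so $|y_r|\to\infty$ with the sign of $y_{r,0}$ preserved.

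\emph{Odd $p$.} Here $y^{-p}$ keeps the sign of $y$.
\begin{itemize}
\item For $r\in\mathcal I_+$ and $y_{r,0}>c_r>0$, the same computation as in the even case gives $u_r\downarrow 0^+$, so $y_r\to+\infty$ at the time in \eqref{eq:escape_time_mode}.
\item The claim that $y_{r,0}<c_r$ gives a forward-complete mode converging to zero is Theorem~\ref{thm:roa_box}(ii), Case~1, together with Remark~\ref{rem:blowup_boundary}.
\item For $r\in\mathcal I_-$ and $y_{r,0}<c_r<0$, both $u_r(0)$ and $\lambda_r/\alpha_r=c_r^{-p}$ are negative. Since $y\mapsto y^{-p}$ is decreasing on $(-\infty,0)$, $y_{r,0}<c_r$ gives $y_{r,0}^{-p}>c_r^{-p}$. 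The coefficient of $e^{p\alpha_r t}$ is therefore positive, so $u_r$ increases from a negative value to $0^-$ in finite time, and $y_r\to-\infty$.
\end{itemize}
In the last case the zero-crossing time is again given by \eqref{eq:escape_time_mode}. Its argument is $\frac{\lambda_r/\alpha_r}{\lambda_r/\alpha_r-y_{r,0}^{-p}}$, a ratio of two negative numbers, which is positive and larger than $1$ because $0>\lambda_r/\alpha_r-y_{r,0}^{-p}>\lambda_r/\alpha_r$. I will check this carefully.

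The main obstacle is the sign bookkeeping for the real-valued power $y^{-p}$ in the odd case, specifically in the blow-down mode. One has to confirm that the branch stays consistent (no crossing of $y_r=0$ before escape) and that the log argument in \eqref{eq:escape_time_mode} is well defined and greater than $1$. Finally, I will argue that the full solution exists only on $[0,\min_r T_{\mathrm{esc},r})$, since it is the maximal interval of existence of the decoupled system, which gives the stated formula for $T_{\mathrm{esc}}$.
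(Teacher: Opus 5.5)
Your proposal is correct and follows essentially the same route as the paper: the substitution $u_r=y_r^{-p}$, the explicit solution of the linear ODE $\dot u_r=p\alpha_r u_r-p\lambda_r$, locating the zero of $u_r$ to get \eqref{eq:escape_time_mode}, translating the signs of $u_r(0)$ and $\lambda_r/\alpha_r$ into the modal regions, and concluding via $\|x(t)\|_2=\|y(t)\|_2$. The only difference is cosmetic: the paper first derives the necessary and sufficient escape condition ($0<u(0)<\lambda/\alpha$ for $\lambda>0$, $\lambda/\alpha<u(0)<0$ for $\lambda<0$) and then specializes by parity, whereas you verify each stated region directly, including the correct sign check that the log argument exceeds $1$ in the blow-down case.
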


\begin{proof}
See Appendix.
\end{proof}

\subsection{Robustness and ISS-type properties under matched disturbances}\label{sec:robust_iss}

We extend the analysis to additive disturbances that are \emph{matched} to the shared ODECO basis for the case of an even $p$. 
Consider
\begin{equation}\label{eq:disturbed_system}
\dot x = Kx + \mathcal A x^{k-1} + d(t),
\qquad
d(t)=\sum_{r=1}^{n} d_r(t)\, v_r,
\end{equation}
where $\mathcal A$ and $K$ satisfy~\eqref{eq:hom_odeco_decomp}.
Let $y(t)=V^\top x(t)$ so that $y_r(t)=v_r^\top x(t)$ and $d_r(t)=v_r^\top d(t)$.
Then each mode satisfies the disturbed scalar dynamics
\begin{equation}\label{eq:disturbed_mode}
\dot y_r = \kappa_r y_r + \lambda_r y_r^{p+1} + d_r(t),
\qquad p:=k-2>0.
\end{equation}

\begin{definition}[\cite{mironchenko2016local}]\label{def:iss}
A system $\dot z = f(z,w)$ is \emph{input-to-state stable (ISS)} with respect to an input $w(\cdot)$ if there exist class-$\mathcal{K}\mathcal{L}$ and class-$\mathcal{K}$ (see \cite{mironchenko2016local}) functions
$\beta$ and $\gamma$ such that for every initial condition and every essentially bounded input,
$|z(t)|
\le
\beta(|z(0)|,t)+\gamma(\|w\|_\infty),
\qquad
\|w\|_\infty:=\mathrm{ess\,sup}_{t\ge 0}|w(t)|.$
If the bound holds only for initial conditions in a neighborhood (or a specified set), we speak of \emph{local ISS} (or ISS on that set).
\end{definition}

We focus on the even-$p$ case (equivalently, even $k$), for which the modal drift admits a convenient symmetry and yields clean robust invariance and ISS-type bounds.

Assume $p$ is even (equivalently, $k$ is even), and that the matched disturbances are essentially bounded:
$|d_r(t)|\le \bar d_r,\qquad \forall t\ge 0,\ \ r=1,\dots,n.$
Let $\mathcal I_+:=\{r:\lambda_r>0\}$.
For each $r\in\mathcal I_+$ define the nominal threshold
$c_{r,*}:=\Bigl(-\frac{\kappa_r}{\lambda_r}\Bigr)^{1/p}$
and $\tilde c_{r,*}$ as the smallest positive solution in $(0,c_{r,*})$ of
\begin{equation}\label{eq:robust_threshold_def_even}
\lambda_r s^{p+1}+\kappa_r s+\bar d_r=0,
\end{equation}
and assume additionally that for each $r\in\mathcal I_+$,
$0<\bar d_r<\bar d_r^{\max}:=\max_{s\in[0,c_{r,*}]} \{-\kappa_r s-\lambda_r s^{p+1}\}$,
so that \eqref{eq:robust_threshold_def_even} admits at least one root in $(0,c_{r,*})$ and hence the smallest one
$\tilde c_{r,*}\in(0,c_{r,*})$ is well-defined.

Define the robust invariant set
\begin{equation}\label{eq:robust_set_even}
\widetilde{\mathcal R}_{\mathrm{even}}
:=
\Bigl\{x_0\in\mathbb R^n:\ 
|v_r^\top x_0|<\tilde c_{r,*}\ \ \forall r\in\mathcal I_+\Bigr\}.
\end{equation}
For each $r\notin\mathcal I_+$ define $\hat c_r$ by: $\hat c_r:=\bar d_r/|\kappa_r|$ if $\lambda_r=0$, and otherwise
$\hat c_r$ is the unique nonnegative solution of
$\lambda_r s^{p+1}+\kappa_r s+\bar d_r=0\ (s\ge 0),$
which exists for all $\lambda_r\le 0$.

\begin{theorem}\label{thm:robust_roa_iss_even}
Consider~\eqref{eq:disturbed_system}--\eqref{eq:disturbed_mode} and assume $\kappa_r<0$ for all $r$.
Then for any $x(0)=x_0\in\widetilde{\mathcal R}_{\mathrm{even}}$, the solution of~\eqref{eq:disturbed_system} exists for all $t\ge 0$ and the set $\widetilde{\mathcal R}_{\mathrm{even}}$ is forward invariant.

Moreover, each mode satisfies an ultimate bound of the form
\begin{equation}\label{eq:ultimate_bound_even}
\limsup_{t\to\infty}|y_r(t)|
\le
\bar c_r,
\qquad
\bar c_r:=
\begin{cases}
\tilde c_{r,*}, & r\in\mathcal I_+,\\[1mm]
\hat c_r, & r\notin\mathcal I_+.
\end{cases}
\end{equation}

Finally, for each mode $r$, the disturbed scalar system~\eqref{eq:disturbed_mode} is ISS on the compact invariant interval
$[-\bar c_r,\bar c_r]$ with an ISS gain that can be chosen as \begin{equation}\label{eq:iss_gain_simple}
\gamma_r(s):=\frac{\bar{c}_r}{\bar{d}_r} s, \quad \gamma_r(\bar d_r)=\bar c_r.
\end{equation}
\end{theorem}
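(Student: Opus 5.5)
The plan is to work entirely in the modal coordinates $y=V^\top x$, where the disturbed system~\eqref{eq:disturbed_mode} is a family of decoupled scalar ODEs. All three claims (existence and invariance, ultimate bound, ISS gain) are then proved mode by mode, and transferred back to $x$ through $x=Vy$ with $V$ orthogonal, as in Theorem~\ref{thm:roa_box}. Fix $r$, write $f_r(s):=\kappa_r s+\lambda_r s^{p+1}$, and note $|\dot y_r-f_r(y_r)|\le\bar d_r$. Since $p$ is even, $f_r$ is odd, so $\operatorname{sign}(y_r)\dot y_r\le \operatorname{sign}(y_r)f_r(y_r)+\bar d_r = -h_r(|y_r|)+\bar d_r$ with $h_r(s):=-\kappa_r s-\lambda_r s^{p+1}$. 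Every step below is a sign argument on $-h_r(|y_r|)+\bar d_r$.

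\textbf{Invariance and existence.} First take $r\in\mathcal I_+$. By definition of $\tilde c_{r,*}$ as the smallest positive root of $h_r(s)=\bar d_r$, we have $h_r(\tilde c_{r,*})=\bar d_r$ and $h_r(s)<\bar d_r$ on $[0,\tilde c_{r,*})$, since $h_r(0)=0<\bar d_r$. So at the boundary $|y_r|=\tilde c_{r,*}$ we get $\frac{d}{dt}|y_r|\le 0$. A standard barrier/comparison argument, applied to $|y_r|$ with a nonstrict inequality, then shows that $[-\tilde c_{r,*},\tilde c_{r,*}]$ is forward invariant. To upgrade this to the open set in~\eqref{eq:robust_set_even}, I would use uniqueness of Carathéodory solutions: a trajectory starting strictly inside can only reach the boundary where the comparison ODE $\dot z=-h_r(z)+\bar d_r$ sits at its equilibrium $\tilde c_{r,*}$, which it cannot reach in finite time from below.

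Next take $r\notin\mathcal I_+$. Here $h_r$ is strictly increasing and unbounded because $\lambda_r\le 0$, so $\hat c_r$ is unique. Moreover $|y_r|$ decreases whenever $|y_r|>\hat c_r$, so $|y_r(t)|\le\max\{|y_{r,0}|,\hat c_r\}$. Both cases together give boundedness of every mode, hence no finite escape, existence for all $t\ge0$, and forward invariance of $\widetilde{\mathcal R}_{\mathrm{even}}$.

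\textbf{Ultimate bound.} With $W=y_r^2/2$ we get $\dot W\le -|y_r|(h_r(|y_r|)-\bar d_r)$, which is strictly negative on the relevant region outside $[-\bar c_r,\bar c_r]$. Compactness then gives a uniform decrease rate on $\{\bar c_r+\delta\le|y_r|\le M\}$, so the trajectory enters $[-\bar c_r-\delta,\bar c_r+\delta]$ in finite time and stays there. Letting $\delta\to0$ yields~\eqref{eq:ultimate_bound_even}.

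\textbf{ISS gain.} For an input with $\|d_r\|_\infty=w\le\bar d_r$, the same argument gives $\limsup|y_r|\le h_r^{-1}(w)$, the smallest root of $h_r=w$. Decay at a rate that depends only on $|y_{r,0}|$ and $t$ supplies a $\mathcal{KL}$ bound $\beta_r$, obtained from the comparison ODE $\dot z=-h_r(z)+w$ on the compact interval. It then remains to show $h_r^{-1}(w)\le \frac{\bar c_r}{\bar d_r}w$.

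For $r\in\mathcal I_+$, $h_r$ is concave on $[0,c_{r,*}]$ with $h_r(0)=0$, so its inverse branch is convex and lies below the chord through $(0,0)$ and $(\bar d_r,\tilde c_{r,*})$; this is exactly the bound needed. For $\lambda_r=0$ the chord is exact, since $h_r^{-1}(w)=w/|\kappa_r|$.

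\textbf{Main obstacle: $\lambda_r<0$.} There $h_r$ is convex, so $h_r^{-1}$ is concave and lies \emph{above} the chord for $w<\bar d_r$. A constant disturbance $d_r\equiv w$ produces an equilibrium at $h_r^{-1}(w)$. Near $0$ this equilibrium behaves like $w/|\kappa_r|$, which exceeds $\frac{\hat c_r}{\bar d_r}w$, so the linear gain as written cannot be verified in this subcase. I would first try to establish~\eqref{eq:iss_gain_simple} in the sense that is actually provable, namely at the design level $w=\bar d_r$, where $\gamma_r(\bar d_r)=\bar c_r$ holds by construction. I would state explicitly that for $\lambda_r<0$ the chord gain must be replaced by the concave gain $h_r^{-1}(w)\le w/|\kappa_r|$ if the bound is required uniformly for all $w<\bar d_r$.
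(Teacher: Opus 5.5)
Your proposal is correct, and your objection in the subcase $\lambda_r<0$ is valid. The final claim of the theorem is false there, and the paper's proof breaks at exactly that point.

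In Step~3 the paper uses the sector bound $yF_r(y)\le-\alpha_r y^2$ on $[-\bar c_r,\bar c_r]$, with $\alpha_r=-(\kappa_r+\max\{\lambda_r,0\}\bar c_r^{\,p})$. This gives $|y_r(t)|\le e^{-\alpha_r t}|y_r(0)|+\|d_r\|_\infty/\alpha_r$, i.e.\ the linear gain $s/\alpha_r$. The paper then asserts that ``in particular'' $\gamma_r(s)=\frac{\bar c_r}{\bar d_r}s$ may be chosen.

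For $r\in\mathcal I_+$ the defining equation of $\tilde c_{r,*}$ gives $\bar d_r=\tilde c_{r,*}(|\kappa_r|-\lambda_r\tilde c_{r,*}^{\,p})=\tilde c_{r,*}\alpha_r$. For $\lambda_r=0$ one has $\bar d_r=|\kappa_r|\hat c_r$. So in both cases the two gains coincide.

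For $\lambda_r<0$, however, $\alpha_r=|\kappa_r|$ while $\bar d_r/\hat c_r=|\kappa_r|+|\lambda_r|\hat c_r^{\,p}>|\kappa_r|$, so the ``in particular'' step is a non sequitur. Your constant input $d_r\equiv w\in(0,\bar d_r)$, started at the equilibrium $h_r^{-1}(w)$, shows the stated gain is actually violated. Strict convexity of $h_r$ gives $h_r^{-1}(w)>\frac{\hat c_r}{\bar d_r}w$ on all of $(0,\bar d_r)$, not only near $0$.

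The paper's own example already exhibits this. For mode $2$ ($\kappa_2=-1$, $\lambda_2=-\tfrac12$, $\bar d_2=0.15$) one finds $\hat c_2\approx0.1484$. Taking $w=0.075$ gives an equilibrium $\approx0.0748$, which exceeds $\frac{\hat c_2}{\bar d_2}w\approx0.0742$. The correct gain for $\lambda_r\le0$ is $s/|\kappa_r|$, which is what the paper's estimate actually delivers, or your sharper $h_r^{-1}(s)$.

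On the parts that are true, your route is close to the paper's but more careful, and it additionally identifies the sharp nonlinear gain.

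\emph{Invariance.} The paper checks $\dot y_r\le F_r(\tilde c_{r,*})+\bar d_r=0$ at the boundary. This addresses only the closed interval, while $\widetilde{\mathcal R}_{\mathrm{even}}$ is open, and leaves implicit why a non-strict inequality prevents crossing. Your comparison with $\dot z=-h_r(z)+\bar d_r$, whose equilibrium $\tilde c_{r,*}$ is reached only asymptotically, settles both points.

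\emph{Ultimate bound.} For $r\notin\mathcal I_+$ the paper infers $\limsup_{t\to\infty}|y_r(t)|\le\hat c_r$ from invariance of $[-\hat c_r,\hat c_r]$. That says nothing when $|y_{r,0}|>\hat c_r$, and $\widetilde{\mathcal R}_{\mathrm{even}}$ does not constrain these modes. Your strict-decrease-plus-compactness argument supplies the missing attractivity.

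\emph{ISS gain.} For $r\in\mathcal I_+$ your chord bound is the same inequality as the paper's sector bound, since the chord slope $\bar d_r/\tilde c_{r,*}$ equals $\alpha_r$. Your $\mathcal{KL}$ term, however, is only sketched. The quickest way to close it is the paper's: $h_r(s)\ge\alpha_r s$ on $[0,\bar c_r]$ gives $D^+|y_r|\le-\alpha_r|y_r|+|d_r|$, which yields transient and gain at once. If you prefer the sharp gain $h_r^{-1}$, use $h_r'\ge h_r'(\tilde c_{r,*})>0$ on $[0,\tilde c_{r,*}]$, which is positive because $\bar d_r<\bar d_r^{\max}$. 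When $\lambda_r\le0$, use instead $h_r'\ge|\kappa_r|$. Either bound gives exponential convergence of the comparison solution to its equilibrium at a rate independent of $w$.
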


\begin{proof}
Fix a mode $r$ and rewrite~\eqref{eq:disturbed_mode} as
$\dot y_r = F_r(y_r)+d_r(t),
F_r(y):=\kappa_r y+\lambda_r y^{p+1},
|d_r(t)|\le \bar d_r.$
Since $p$ is even, $F_r$ is an odd function: $F_r(-y)=-F_r(y)$.

\emph{Step 1: robust invariance for destabilizing modes ($r\in\mathcal I_+$).}
Let $r\in\mathcal I_+$ so that $\lambda_r>0$ and $\kappa_r<0$.
By definition~\eqref{eq:robust_threshold_def_even}, $F_r(\tilde c_{r,*})+\bar d_r=0$ with $\tilde c_{r,*}\in(0,c_{r,*})$.
At the upper boundary $y_r=\tilde c_{r,*}$, the worst-case pushing outward is $d_r(t)=+\bar d_r$, and we obtain
$\dot y_r \le F_r(\tilde c_{r,*})+\bar d_r=0,$
so trajectories cannot cross $\tilde c_{r,*}$ from below.
By oddness of $F_r$, we also have $F_r(-\tilde c_{r,*})-\bar d_r=0$.
At the lower boundary $y_r=-\tilde c_{r,*}$, the worst-case disturbance pushing outward is $d_r(t)=-\bar d_r$, and
$\dot y_r \ge F_r(-\tilde c_{r,*})-\bar d_r=0.$
Hence $[-\tilde c_{r,*},\tilde c_{r,*}]$ is forward invariant for mode $r$.
Taking the intersection over all $r\in\mathcal I_+$ yields forward invariance of $\widetilde{\mathcal R}_{\mathrm{even}}$.

\emph{Step 2: ultimate bound for the remaining modes ($r\notin\mathcal I_+$).}
For $r\notin\mathcal I_+$ we have $\lambda_r\le 0$ and $\kappa_r<0$.
Since $p$ is even, $y^{p}=|y|^{p}\ge 0$, and therefore
$F_r(y)=\kappa_r y+\lambda_r y|y|^{p}
= y\big(\kappa_r+\lambda_r |y|^{p}\big).$
For any $y>0$, $\kappa_r+\lambda_r |y|^{p}\le \kappa_r<0$, hence $F_r(y)<0$; similarly, for any $y<0$, $F_r(y)>0$.
Thus the drift points toward the origin on both sides, and the disturbance creates a nonzero steady-state offset.

Define $\hat c_r\ge 0$ as the smallest radius such that the inward-pointing conditions hold under worst-case disturbance,
i.e.,
$F_r(\hat c_r)+\bar d_r\le 0,
F_r(-\hat c_r)-\bar d_r\ge 0.$
If $\lambda_r=0$, this yields $\hat c_r=\bar d_r/|\kappa_r|$.
If $\lambda_r<0$, then $\phi_r(s):=\lambda_r s^{p+1}+\kappa_r s+\bar d_r$ satisfies
$\phi_r(0)=\bar d_r>0,\qquad \lim_{s\to\infty}\phi_r(s)=-\infty,$
and $\phi_r'(s)=(p+1)\lambda_r s^{p}+\kappa_r<0$ for all $s\ge 0$, so $\phi_r$ is strictly decreasing and has a unique root $\hat c_r\ge 0$.
By the same inward-pointing argument as Step~1, $[-\hat c_r,\hat c_r]$ is forward invariant and
$\limsup_{t\to\infty}|y_r(t)|\le \hat c_r$.
Collecting all modes yields~\eqref{eq:ultimate_bound_even} and the Euclidean bound on $\|x(t)\|_2$.

\noindent\emph{Step 3: ISS estimate on the invariant set.}
On the compact forward-invariant interval $[-\bar c_r,\bar c_r]$, the scalar drift $F_r$ is continuously differentiable and hence locally Lipschitz.
Moreover, for all $y\in[-\bar c_r,\bar c_r]$ we have
$yF_r(y)=\kappa_r y^2+\lambda_r y^{p+2}=y^2\big(\kappa_r+\lambda_r |y|^p\big)\le -\alpha_r y^2,$
where $\alpha_r:=-(\kappa_r+\max\{\lambda_r,0\}\bar c_r^{\,p})>0$ (note that $\bar c_r< c_{r,*}$ for $r\in\mathcal I_+$ by construction, while $\lambda_r\le 0$ for $r\notin\mathcal I_+$).
Along any solution of $\dot y_r=F_r(y_r)+d_r(t)$, with $|d_r(t)|\le \bar d_r$, define $\rho_r(t):=|y_r(t)|$.
For almost all $t$ with $\rho_r(t)>0$,
$\frac{d}{dt}\frac{1}{2}\rho_r(t)^2 = y_r(t)\dot y_r(t)\le -\alpha_r \rho_r(t)^2+\rho_r(t)\,|d_r(t)|,$
which implies the scalar differential inequality $\dot \rho_r(t)\le -\alpha_r \rho_r(t)+|d_r(t)|$.
By comparison with $\dot z=-\alpha_r z+|d_r(t)|$, we obtain
$|y_r(t)|\le e^{-\alpha_r t}|y_r(0)|+\int_0^t e^{-\alpha_r(t-s)}|d_r(s)|\,ds
\le e^{-\alpha_r t}|y_r(0)|+\frac{1-e^{-\alpha_r t}}{\alpha_r}\,\|d_r\|_\infty.$
Hence (20) is ISS on $[-\bar c_r,\bar c_r]$ in the sense of Definition~1, and in particular one may choose the ISS gain as in \eqref{eq:iss_gain_simple}.
\end{proof}
\begin{remark}
    Our sharp modal decoupling and ROA/time certificates rely on a (shared-basis) ODECO structure.
If the original homogeneous polynomial tensor is not ODECO, one may attempt to
\emph{transform} or \emph{approximate} it by an ODECO tensor via a linear change of coordinates.
A practical pipeline is proposed by Chen~\cite{chen2022explicit}: represent the HPDS by an
``almost-symmetric'' dynamic tensor and fit a \emph{structured CP factorization} under constraints
that enforce an ODECO-type factor relation (implemented, e.g., via constrained nonlinear least squares;
see Algorithm~1 in~\cite{chen2022explicit}).
If the fit error is below a prescribed tolerance, the model is (approximately) ODECO-izable and our results apply; otherwise, the fitted ODECO tensor provides a surrogate model.
In the approximate case, the mismatch between the original tensor term and the ODECO surrogate can be aggregated
as an additive perturbation in the modal coordinates, which motivates applying the robust ISS-type
arguments on the certified set.
\end{remark}

\section{Numerical experiment}

We validate the even-$p$ results using a 2D example with $k=4$ (hence $p=2$).
Let $x=(x_1,x_2)^\top\in\mathbb R^2$ and choose the orthonormal basis
\[
v_1=\begin{bmatrix}\cos\theta\\ \sin\theta\end{bmatrix},\qquad
v_2=\begin{bmatrix}-\sin\theta\\ \cos\theta\end{bmatrix},
\qquad \theta=\frac{\pi}{6},
\]
so that $V=[v_1,v_2]$.
We set
$\kappa_1=\kappa_2=-1,\ \lambda_1=1,\ \lambda_2=-\tfrac12,\ k=4$.
The closed-loop system in the original coordinates is
\begin{equation}\label{eq:exp_even_2d_system}
\begin{split}
\dot x
&=
-(x_1,x_2)^\top
+
\Bigl((v_1^\top x)^3\Bigr)\,v_1
-\frac12\Bigl((v_2^\top x)^3\Bigr)\,v_2,\\
v_1^\top x &=\cos\theta\,x_1+\sin\theta\,x_2,\ \ 
v_2^\top x=-\sin\theta\,x_1+\cos\theta\,x_2.\
\end{split}
\end{equation}

\subsubsection{ROA visualization}\label{sec:exp_even_roa_results}
For even $p$, Theorem~\ref{thm:roa_box} predicts the sharp ROA 
\begin{equation}\label{eq:exp_even_roa_strip}
\mathcal R_{\mathrm{even}}
=
\Bigl\{x_0\in\mathbb R^2:\ |v_1^\top x_0|<c_1\Bigr\},
\quad
c_1=\Bigl(-\frac{\kappa_1}{\lambda_1}\Bigr)^{1/2}=1.
\end{equation}
Figure~\ref{fig:exp_even_roa_basin} shows the basin classification on a grid and the boundary lines $v_1^\top x=\pm 1$, which is indeed in line with Theorem~\ref{thm:roa_box}.

\begin{figure}[t]
  \centering
  \includegraphics[width=0.8\linewidth]{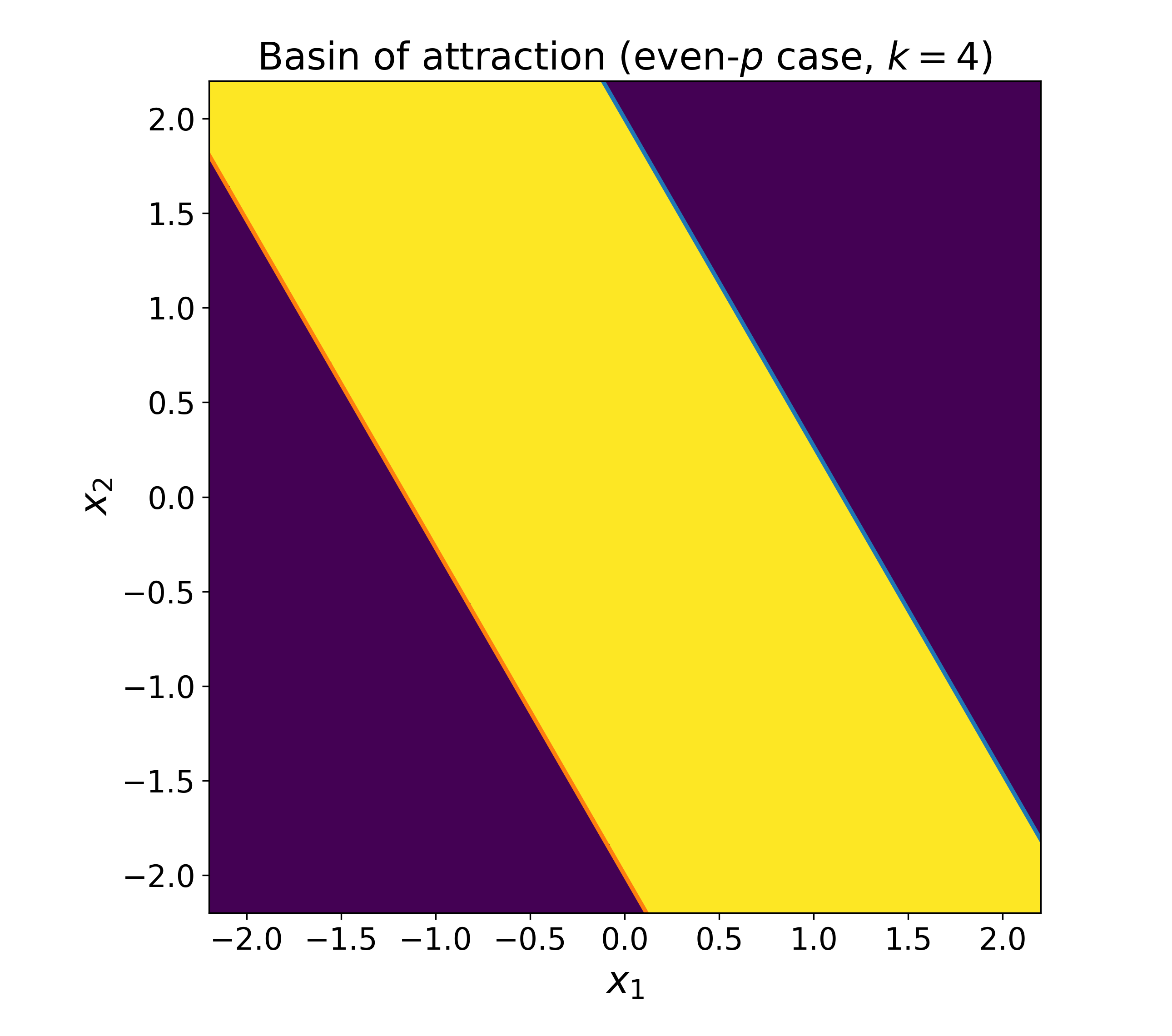}
  \caption{Even-$p$ example ($k=4$): basin classification on a grid.
  The solid lines are the ROA boundaries $v_1^\top x=\pm 1$ in line with that calculated from ~\eqref{eq:exp_even_roa_strip}.}
  \label{fig:exp_even_roa_basin}
\end{figure}

\subsubsection{ISS-type validation under matched disturbances (even $p$)}\label{sec:exp_iss_even}

We consider the matched disturbance model in~\eqref{eq:disturbed_system} with
$d(t)=d_1(t)v_1+d_2(t)v_2$ and choose
$d_1(t)=\bar d\sin(\omega t)$, $d_2(t)=\bar d\cos(0.9\omega t)$ with $\bar d=0.15$.
Figure~\ref{fig:exp_iss_timeseries} shows a representative trajectory in modal coordinates together with the predicted ultimate bounds from Theorem~\ref{thm:robust_roa_iss_even}.
Figure~\ref{fig:exp_iss_gain} sweeps $\bar d$ and compares measured ultimate magnitudes against the predicted curves.

\begin{figure}[t]
  \centering
  \includegraphics[width=0.8\linewidth]{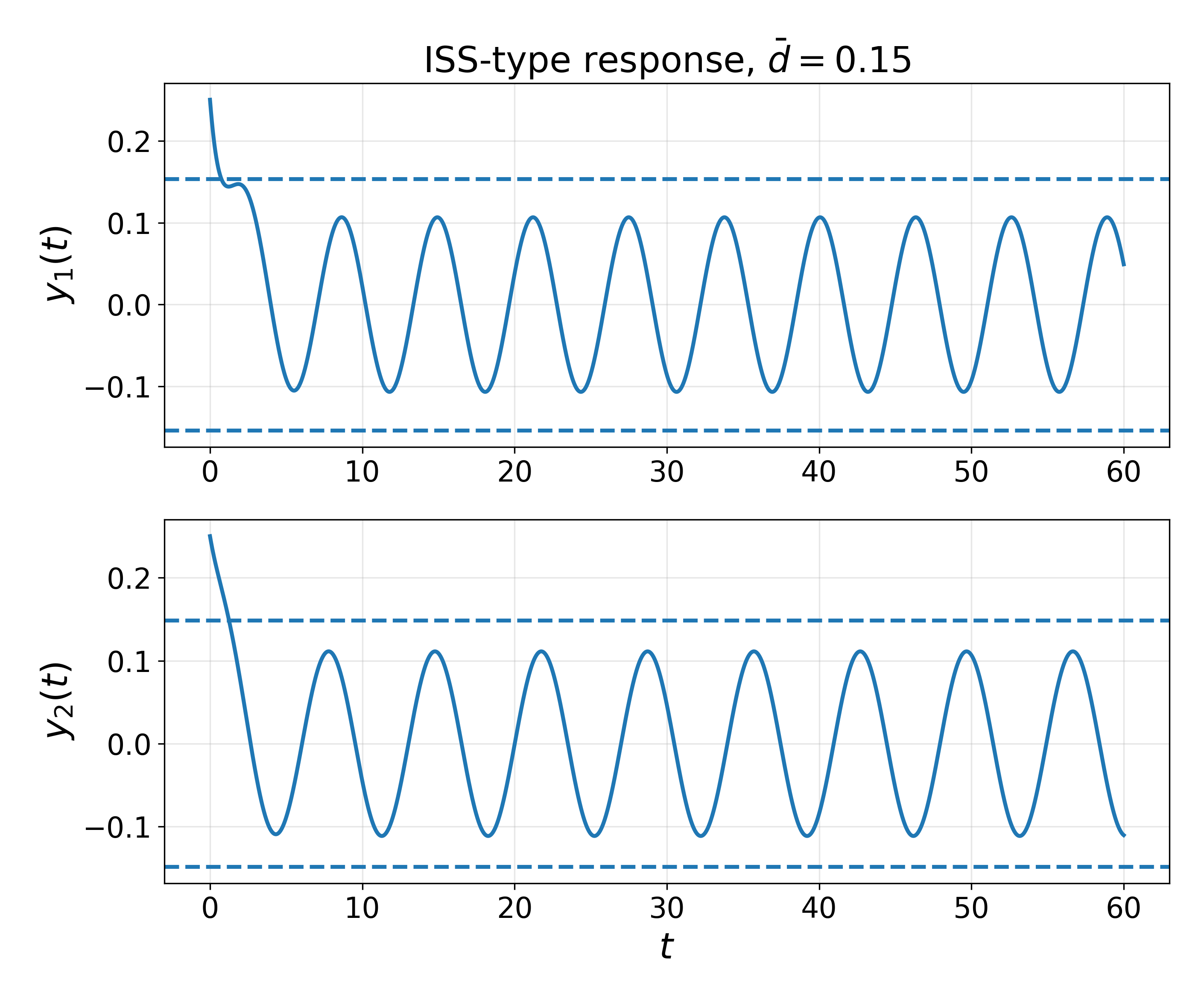}
  \caption{Even-$p$ disturbed case ($\bar d=0.15$): modal time series $y_1(t),y_2(t)$ with the predicted ultimate bounds (dashed).}
  \label{fig:exp_iss_timeseries}
\end{figure}

\begin{figure}[t]
  \centering
  \includegraphics[width=0.95\linewidth]{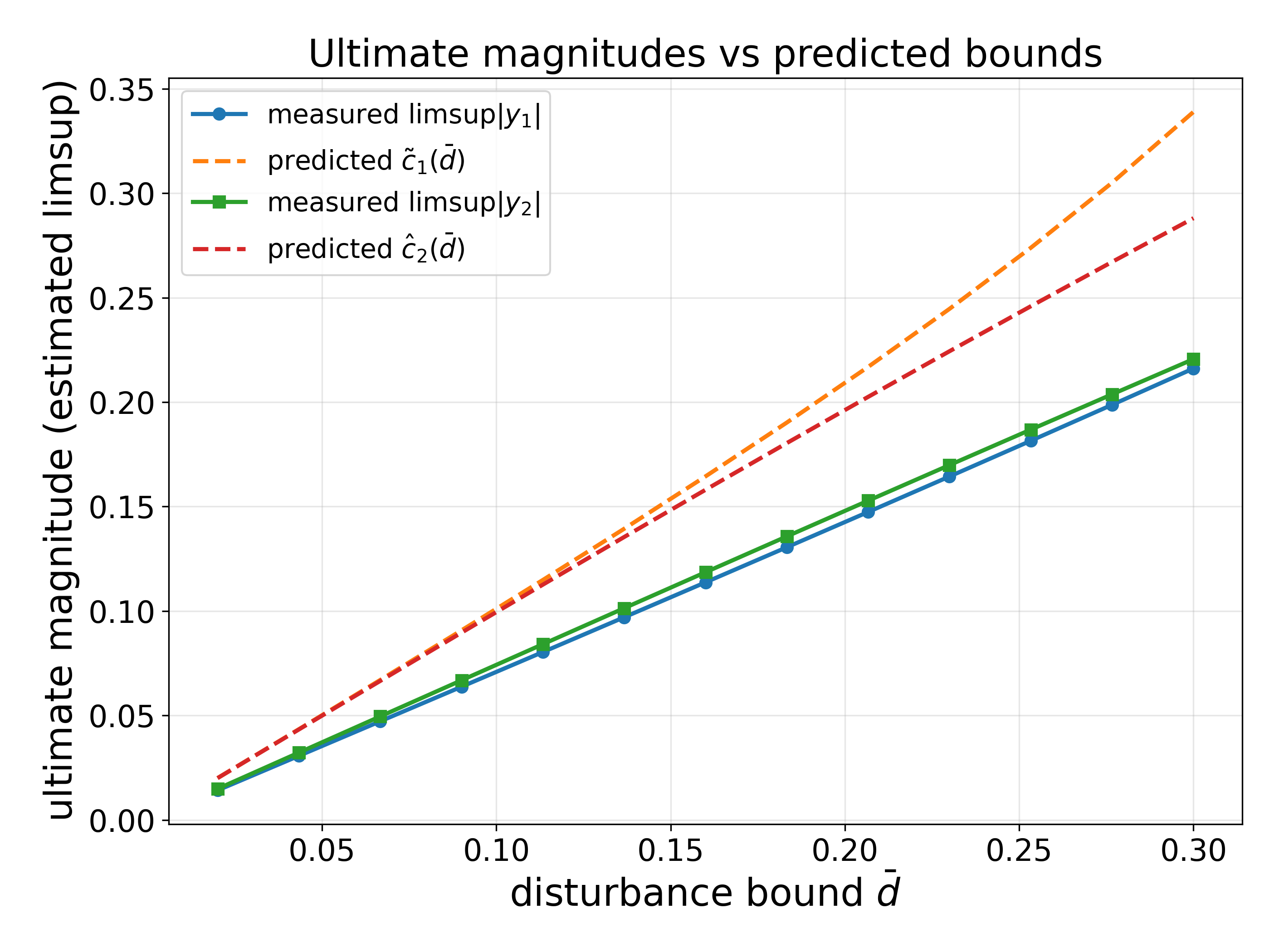}
  \caption{Sweep over $\bar d$: measured ultimate magnitudes (markers) versus predicted bounds (dashed) from Theorem~\ref{thm:robust_roa_iss_even}.}
  \label{fig:exp_iss_gain}
\end{figure}


\section{Conclusion}
This paper proposed a structure-preserving linear feedback design for homogeneous ODECO polynomial systems by enforcing that the controller shares the ODECO eigenbasis of the system tensor. This shared-basis choice yields exact modal decoupling and enables closed-form trajectory expressions, which in turn lead to explicit and sharp region-of-attraction characterizations and computable convergence/escape time estimates. For matched bounded disturbances in the even-$p$ case, we further established ISS-type ultimate bounds. Numerical experiments validated the theoretical findings.

\bibliographystyle{IEEEtran}
\bibliography{bib}

\section{Appendix}

\subsection{Proof of Proposition \ref{prop:settling_time}}
Fix a mode $r$ and consider $\dot y_r=\kappa_r y_r+\lambda_r y_r^{p+1}$ with $\kappa_r<0$.
Along trajectories starting in the ROA, the sign of $y_r(t)$ is preserved and $y_r(t)\to 0$ (Theorem~\ref{thm:roa_box}).
Hence, on the interval $[0,T_{\varepsilon,r})$ we have $|y_r(t)|>\varepsilon>0$ and the transformation below is well-defined.
Let $s_r:=\operatorname{sign}(y_{r,0})\in\{-1,1\}$ and define
$u_r(t):=|y_r(t)|^{-p}>0.$
Using $\frac{d}{dt}|y_r| = s_r \dot y_r$ (since $\operatorname{sign}(y_r(t))=s_r$ on the ROA trajectory), we obtain
\begin{align*}
\dot u_r
&=-p\,|y_r|^{-p-1}\frac{d}{dt}|y_r|
= -p\,|y_r|^{-p-1}\,s_r(\kappa_r y_r+\lambda_r y_r^{p+1})\\
&= -p\Big(\kappa_r |y_r|^{-p} + \lambda_r\, s_r^{\,p+2}\Big)
= -p\kappa_r u_r - p\lambda_r\, s_r^{\,p+2}.
\end{align*}
If $p$ is even, then $p+2$ is even and $s_r^{p+2}=1$, yielding
$\dot u_r = p\alpha_r u_r - p\lambda_r.$
If $p$ is odd, then $p+2$ is odd and $s_r^{p+2}=s_r$, yielding
$\dot u_r = p\alpha_r u_r - p\lambda_r s_r.$
In both cases this is a scalar linear ODE with solution
$u_r(t)=e^{p\alpha_r t}\Big(u_r(0)-\frac{b_r}{\alpha_r}\Big)+\frac{b_r}{\alpha_r},
u_r(0)=|y_{r,0}|^{-p},$
where $b_r=\lambda_r$ if $p$ is even, and $b_r=\lambda_r s_r$ if $p$ is odd.
The hitting condition $|y_r(t)|=\varepsilon$ is equivalent to $u_r(t)=\varepsilon^{-p}$.
Solving for $t$ gives
$T_{\varepsilon,r}
=
\frac{1}{p\alpha_r}\ln\!\Bigg(
\frac{\varepsilon^{-p}-b_r/\alpha_r}{|y_{r,0}|^{-p}-b_r/\alpha_r}
\Bigg),$
which reduces to~\eqref{eq:Teps_even} when $p$ is even and to~\eqref{eq:Teps_odd} when $p$ is odd.
If $\lambda_r=0$, then $\dot y_r=\kappa_r y_r$ and~\eqref{eq:Teps_linear} follows.

 \subsection{Proof of Theorem \ref{thm:blowup_region_parity}}
 Fix a mode $r$ and consider $\dot y=\kappa y+\lambda y^{p+1}$ with $\kappa=-\alpha<0$ and $\lambda\neq 0$.
On any interval where $y(t)\neq 0$, define $u(t):=y(t)^{-p}$.
A direct calculation yields the linear ODE
$\dot u = p\alpha u - p\lambda,$
whose solution is
$u(t)=e^{p\alpha t}\Big(u(0)-\frac{\lambda}{\alpha}\Big)+\frac{\lambda}{\alpha},
 u(0)=y_0^{-p}.$
Finite-time escape occurs if and only if $u(t)$ reaches $0$ at a finite time, since $u=y^{-p}$ implies $|y|\to\infty$ as $u\to 0$.
Solving $u(t)=0$ gives~\eqref{eq:escape_time_mode} whenever it yields a positive time, i.e., whenever
$e^{p\alpha T}=\frac{\lambda/\alpha}{\lambda/\alpha-u(0)}
\quad\Longleftrightarrow\quad
\frac{\lambda/\alpha}{\lambda/\alpha-u(0)}>1.$
This is equivalent to the following:
\[
\lambda>0:\quad 0<u(0)<\frac{\lambda}{\alpha},
\qquad
\lambda<0:\quad \frac{\lambda}{\alpha}<u(0)<0.
\]
We now translate these conditions into explicit regions in terms of $y_0$.

\smallskip
\noindent
\emph{Even $p$.}
Then $u(0)=y_0^{-p}>0$ for any $y_0\neq 0$.
Thus escape can only occur for $\lambda>0$, and $0<u(0)<\lambda/\alpha$ is equivalent to
$|y_0|>\Big(\frac{\alpha}{\lambda}\Big)^{1/p}=c_r,$
which proves (i).

\smallskip
\noindent
\emph{Odd $p$.}
Then $u(0)=y_0^{-p}$ has the same sign as $y_0$.
If $\lambda>0$, the escape condition $0<u(0)<\lambda/\alpha$ requires $u(0)>0$, hence $y_0>0$, and is equivalent to $y_0>c_r$.
If $\lambda<0$, the escape condition $\lambda/\alpha<u(0)<0$ requires $u(0)<0$, hence $y_0<0$, and is equivalent to $y_0<c_r$ (note $c_r<0$).
These give the two one-sided escape regions stated in (ii).
The forward-complete/non-escape statement for the complementary half-lines follows from the sign of $\dot y = y(\kappa+\lambda y^p)$ and the existence of the nonzero equilibrium at $y=c_r$.

Finally, if any mode escapes in finite time, then $\|x(t)\|_2=\|y(t)\|_2\ge |y_r(t)|\to\infty$, hence the full state escapes in finite time.

\end{document}